%
%
%
%
%
%
\RequirePackage{fix-cm}
\documentclass[twocolumn]{svjour3}          
 \relax
\smartqed  
\usepackage{graphicx}
\usepackage{verbatim}
\usepackage[authoryear]{natbib}
\usepackage{amsmath}
%
%
%
%
\journalname{Statistics and Computing}
\DeclareMathOperator{\xgamma}{\Gamma}
\DeclareMathOperator{\xf}{f}
\newcommand{\design}{{\mathcal D}}
\newcommand{\fraction}{{\mathcal F}}

\hyphenation{pa-ra-me-tri-za-tion ga-spa-ri-ni ef-fi-cien-cy pen-al-ty}

\begin{document}

\title{Random generation of optimal saturated designs 
}
\subtitle{An approach based on discovery probability}


\author{Roberto Fontana}


\institute{R. Fontana \at
              Department of Mathematical Sciences - Politecnico di Torino \\
              Tel.: +39-011-0907504
              \email{roberto.fontana@polito.it}           
}

\date{Received: date / Accepted: date}

\maketitle

\begin{abstract}
Efficient algorithms for searching for optimal saturated designs are widely available. They maximize a given efficiency measure (such as D-optimality) and provide an optimum design. Nevertheless, they do not guarantee a \emph{global} optimal design. Indeed, they start from an initial random design and find a local optimal design. If the initial design is changed the optimum found will, in general, be different. A natural question arises. Should we stop at the design found or should we run the algorithm again in search of a better design? This paper uses very recent methods and software for discovery probability to support the decision to continue or stop the sampling. A software tool written in SAS has been developed.
\keywords{Design of experiments \and Optimal designs \and Unobserved species \and Discovery probability}
\end{abstract}

\section{Introduction}
\label{intro}
In the design of experiments, optimal designs, or optimum designs, are a class of experimental designs that are optimal with respect to a given statistical criterion.

In this paper we focus on saturated optimum designs (SOD). Saturated designs contain a number of points that is equal to the number of parameters of the model. It follows that SODs are often used in place of standard designs, such as orthogonal fractional factorial designs, when the cost of each experimental run is high. Main references to this topic include \cite{atkinson2007optimum}, \cite{pukelsheim2006optimal}, \cite{shah1989theory} and \cite{wynn1970sequential}.
 
The optimality of a design depends on the statistical model that is assumed and is assessed with respect to a statistical criterion, which, for information-based criteria, is related to the variance-matrix of the model parameter estimators. Well-known and commonly used criteria are A-optimality and D-optimality.

Widely used statistical systems like SAS and R have procedures for finding an optimal design according to the user's specifications. In this paper we will refer to Proc Optex of SAS/QC (\cite{sasqc:10}), but the approach can be adopted for other software.

The Optex procedure searches for optimal experimental designs. The user specifies an efficiency criterion, a set of candidate design points, a linear model and the size of the design to be found and the procedure generates a subset of the candidate set so that the terms in the model can be estimated as efficiently as possible. By default, the standard output of the procedure is a list of $10$ designs that are found as the result of $10$ runs of the exchange search algorithm (\cite{mitchell1970use}) starting each time from an initial completely randomly chosen design. 

The number of times that we decide to run the search algorithm is crucial. Obviously, if we increase it, in general we will explore different local optima with the possility to find better designs. On the other hand, sometimes, the extra time that we use to explore other possibilities is wasted because new optima do not exist. This work aims at developing a methodology that could support the user in making the decision whether to stop or continue the search.

The paper is organized as follows. In Sect.~\ref{sec:1} we state the problem of finding new optimal designs as the problem of finding new species in a population. Then, in Sect.~\ref{sec:2}, using some examples, we describe how our methodology, which is based on the estimator of the discovery probability, could be used for optimal design generation. In Sect.~\ref{sec:algo} we describe the algorithm in more detail. The software code that has been developed is written in SAS, is available on request and can be used for any choice of factors, levels and model. Concluding remarks are in Sect.~\ref{sec:conclusion}.

\section{Optimal designs vs richness of species} \label{sec:1}
We consider the following setting that is quite common in optimal design problems.

We have $d$ factors, $A_1,\ldots, A_d$.  The factor $A_i$ has $s_i$ levels coded with the integer $0,\ldots,s_i-1$, $i=1,\ldots,d$. The full factorial design is $\mathcal{D}=\left\{0,\ldots,s_1-1\right\} \times \ldots \times \left\{0,\ldots,s_m-1\right\}$. For each point $\zeta=(\zeta_1,\ldots,\zeta_d)$ of $\mathcal{D}$ we consider a real-valued random variable $Y_{\zeta_1,\ldots,\zeta_d}$. We make the hypothesis that the means of the responses, $E\left[Y\right]$ where $Y$ is the column vector $\left[ Y_\zeta; \zeta \in \design \right]$
can be modeled as
\begin{equation} \label{eq:1}
E\left[Y\right]=X_{\mathcal{D}} \beta \, ,
\end{equation} 
where $X_{\mathcal{D}}$ is the non-overparametrized design matrix, as it will be defined in Sect.~\ref{subsec:opt}, and $\beta$ is the subset of all the effects (constant effect, main effects and interactions) that are supposed to affect the response $Y$.  

Given an efficiency criterion $\phi$, a saturated optimal design ($\phi$-SOD) is a subset of the full factorial design $\mathcal{D}=\left\{0,\ldots,s_1-1\right\} \times \ldots \times \left\{0,\ldots,s_m-1\right\}$, whose size is equal to the number of degrees of freedom of the model (\ref{eq:1}) and that maximizes this criterion $\phi$. In this paper we focus on information-based criteria and, in particular, on $D$-optimality but other criteria can be chosen (like $A$-optimality and $G$-optimality). We denote this type of problem with the triple $(\design, \mathcal{M},\phi)$ where $\design$ is the full design, $\mathcal{M}$ is the hypothesized model (see Eq.~\ref{eq:1}) and $\phi$ is the optimality criterion. 

Given a subset $\fraction$ of $\design$, the information matrix is defined as $X_\fraction'X_\fraction$ where $X_\fraction$ is the design matrix corresponding to $\fraction$ and $X'$ is the transpose of $X$. $D$-optimality aims at maximizing $D_\fraction$, the determinant of the information matrix
\begin{equation} \label{dopt}
D_\fraction=\det(X_\fraction'X_\fraction) \, .
\end{equation} 

There are several algorithms for searching for $D$-optimal designs. They have a common structure. They start from an initial design, randomly generated or user specified, and move, in a finite number of steps, to a better design. In general, if a different initial design is chosen, a different optimal design is found. 

It follows that, given an algorithm $\alpha$, a population $\mathcal{A}_\alpha^D$ of $D$-optimal designs can be defined. This population is made up of all the saturated designs that are the result of the execution of the algorithm $\alpha$ and is a subset of all the subsets of $\design$ of size equal to the number of degrees of freedom of the model.

The elements of $\mathcal{A}_\alpha^D$ can be classified into species, according to the criterion for which $\fraction_1 \in \mathcal{A}_\alpha^D$ and $\fraction_2 \in \mathcal{A}_\alpha^D$ are of the same species if and only if they have the same value in terms of the $D$ criterion, $D_{\fraction_1}=D_{\fraction_2}$.
  
We observe that, as proved in Proposition~\ref{pr:iso}, isomorphic designs belong to the same species, while, in general, the opposite is not true because there are designs with the same value of the $D$ criterion but that are not isomorphic. As is known two designs are isomorphic if one can be obtained from the other by relabeling the factors, reordering the runs, and switching the levels of factors, e.g. \cite{clark2001equivalence}. 

\begin{proposition} \label{pr:iso}
Let us consider $\fraction_1 \subseteq \design$ and $\fraction_2 \subseteq \design$. If $\fraction_1$ and $\fraction_2$ are isomorphic then $D_{\fraction_1}=D_{\fraction_2}$.
\end{proposition}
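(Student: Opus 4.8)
\medskip\noindent\emph{Proof idea.}
The plan is to write any isomorphism carrying $\fraction_1$ onto $\fraction_2$ as a composition of the three elementary operations recalled just above the statement --- reordering the runs, relabelling the factors, switching the levels of a factor --- and to show that each of them replaces the design matrix $X_{\fraction}$ by $P\,X_{\fraction}\,R$, where $P$ is a permutation matrix acting on the runs (the rows) and $R$ is a matrix acting on the model terms (the columns) with $\det(R)=\pm1$. Since $P'P=I$, we then have $(P X_{\fraction} R)'(P X_{\fraction} R)=R'(X_{\fraction}'X_{\fraction})R$, so each step multiplies the determinant of the information matrix by $\det(R)^2=1$; iterating over the elementary steps yields $D_{\fraction_1}=D_{\fraction_2}$. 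I use throughout that $\fraction_1$ and $\fraction_2$ are fractions of the \emph{same} problem $(\design,\mathcal{M},\phi)$, so that the model $\mathcal{M}$, equivalently the column space of $X_{\design}$, is invariant under the bijection $\psi$ of $\design$ realising the isomorphism; this invariance is exactly what makes $R$ well defined.

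Two of the three cases are immediate. Reordering the runs is $X_{\fraction}\mapsto P X_{\fraction}$ with $R=I$, and here even the information matrix is unchanged. Relabelling the factors comes from a permutation $\tau$ of the factor indices (among factors having the same number of levels), which induces a bijection $\psi$ of $\design$ permuting the coordinates of each point and, correspondingly, a permutation of the model terms (the main effect of $A_i$ becomes that of $A_{\tau(i)}$, and similarly for interactions); hence $X_{\psi(\fraction)}=P\,X_{\fraction}\,R$ with $P$ and $R$ permutation matrices, and again $\det(R)^2=1$.

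The delicate case is switching the levels of a factor $A_i$ through a permutation $\rho$ of $\{0,\ldots,s_i-1\}$, because its effect on $X_{\fraction}$ depends on the coding used in the non-overparametrized design matrix of Sect.~\ref{subsec:opt}. The point to establish is that $\rho$ leaves invariant the block of coding columns attached to $A_i$ --- equivalently, it maps the mean-zero functions of $\zeta_i$ into themselves --- and likewise every block of interaction columns involving $A_i$; granting this, there is an invertible $R$ acting only on the columns with $X_{\psi(\fraction)}=P\,X_{\fraction}\,R$ for a run-permutation $P$. Since $\rho$ merely permutes the coordinates of the finite-dimensional space of real functions on $\design$, the operator it induces there is orthogonal; its restriction to the invariant model space is orthogonal as well, so $R$ represents an orthogonal operator and $\det(R)=\pm1$ (if the coding columns are orthogonal but not of unit norm, $R$ is conjugate to an orthogonal matrix and $\det(R)=\pm1$ just the same). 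Putting the three cases together, every isomorphism from $\fraction_1$ to $\fraction_2$ multiplies $X_{\fraction}'X_{\fraction}$ on the left and on the right by matrices of determinant $\pm1$, whence $D_{\fraction_1}=D_{\fraction_2}$. I expect the only real work to be this last step: for the coding of Sect.~\ref{subsec:opt} one must check the invariance of those column blocks and that $\rho$ acts on them with a determinant of modulus one; the other two cases are pure bookkeeping with permutation matrices.
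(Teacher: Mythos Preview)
Your proposal is correct and follows the same strategy as the paper: write the isomorphism as a composition of elementary moves and show that each one gives $X_{\fraction_2}=P\,X_{\fraction_1}\,R$ with $\det P,\det R\in\{\pm1\}$, so that $D_{\fraction_2}=(\det P)^2(\det R)^2 D_{\fraction_1}=D_{\fraction_1}$. The paper's own argument is terser---it lumps run reordering and factor relabelling into a single row/column permutation step and for level switching merely asserts that ``a similar argument holds''---so your careful treatment of the level-switching case (invariance of the sum-zero column block and basis-independence of the determinant of the induced orthogonal operator) actually supplies the detail the paper omits, and is genuinely needed for factors with three or more levels, where the induced $R$ is not a permutation matrix.
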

\begin{proof}
We separately analyse row/column permutations and the switching of the levels of some factors.
If $\fraction_2$ is obtained permuting the rows and/or the columns of $\fraction_1$ it follows that
\[
X_{\fraction_2}=R X_{\fraction_1} C
\] 
where $R$ and $C$ are permutation matrices. Then
\begin{eqnarray*}
D_{\fraction_2}=\\ 
= \det((X_{\fraction_2}'X_{\fraction_2}))= (\det(R))^2 \det((X_{\fraction_1}'X_{\fraction_1})) (\det(C))^2 = \\ 
= D_{\fraction_1}
\end{eqnarray*}
being $\det(R)=\det(C)=1$. 
A similar argument holds for switching the levels of some factors. \qed  
\end{proof} 

Studying the species of $\mathcal{A}_\alpha^D$ or, in general, of $\mathcal{A}_\alpha^\phi$ where $\phi$ is an optimal criterion, is interesting for optimal design generation. Let us consider the problem $(\design, \mathcal{M},\phi)$ and let us choose an algorith $\alpha$ to search for $\phi$-SODs. If we run this algorithm $n$ times, each time starting from a completely random initial design, we will get a sample of $n$ elements of $\mathcal{A}_\alpha^\phi$. Such elements can be classified in $k_n \leq n$ different species according to the value of the criterion $\phi$. Recent methods for discovery probability estimation, \cite{favaro:12}, can be applied to the vector $(\ell_1,l_2,\ldots,\ell_n)$ where $\ell_r$ is the number of species in the sample with frequency $r$, $r=1,\ldots,n$. In particular, based on a sample of size $n$, for any additional unobserved sample size $m\geq 0$ and for any frequency $k=0,\ldots, n+m$, these methods provide, an explicit estimator for the probability $U_{n+m}(k)$ that the $(n+m+1)$-th observation coincides with a species whose frequency, within the sample size $n+m$, is exactly $k$. The case $m=k=0$ corresponds to assessing the probability of finding a new species in the subsequent observation, that in the context of optimal designs, is the probability of finding a saturated design with a different value of the criterion $\phi$ in the subsequent run of the algorithm. If this probability $U_{n+0}(0)$ is sufficiently high (let us say greater than $0.1$ or even $0.05$) it would be convenient to run the algorithm again because it is likely that we could find a new optimal design. If we found a new design, it could have a greater value of $\phi$ and this obviously represents an improvement to our optimization process. Even if this new design did not have an higher value of $\phi$ than the existing ones, this would give the possibility to increase the known part of $\mathcal{A}_\alpha^\phi$. In particular, for $D$-optimal designs, from Proposition~\ref{pr:iso}, we know that designs with different values of $D_\fraction$ are non-isomorphic designs. It is quite common, in practical applications, to choose a design where the optimal criterion has a slightly smaller value than the maximum obtained but which has other better characteristics, such as space filling properties. The knowledge of a set of non-isomorphic designs can also be used for non parametric testing procedures, \cite{arb_fon_rag} and \cite{basso04}. 

\subsection{The design matrix} \label{subsec:opt}
The design matrix $X_{\mathcal{D}}$ in Eq.~\ref{eq:1} is built as follows.
\begin{itemize}
\item The first column is equal to $1$ and corresponds to the constant effect, denoted by $\mu$. The constant effect is always considered as a term of the model.
\item If the main effect of the factor $A_i$ is to be considered in the model, the corresponding $s_i-1$ columns are computed as follows. For a design point with $A_i$ at its $k$-th level
\begin{itemize}
\item if $1 \leq k \leq s_i-1$ the columns are all $0$ except for the $k$-th column that is $1$;
\item if $k=s_i$ the columns are all $-1$  
\end{itemize}
\item If an interaction $A_{i_1} \star \ldots \star A_{i_k}$ is to be considered in the model, the corresponding $(s_{i_1}-1)\cdot \ldots \cdot (s_{i_k}-1)$ columns are computed by taking the horizontal direct product of the colummns corresponding to the main effects of $A_{i_1}, \ldots, A_{i_k}$.
\end{itemize}
This coding corresponds to modeling without over parametrization and $X_\design$ is full rank.

For a subset $\mathcal{F}$ of $\mathcal{D}$, the design matrix $X_\mathcal{F}$ is simply built deleting from $X_\mathcal{D}$ the rows that correspond to the points of $\mathcal{D}$ that are not in $\mathcal{F}$.

\subsection{Discovery probability} \label{subsec:dp}
We briefly summarize  the main results that are used in this work, as in \cite{favaro:12}. The interested reader should refer to the original paper for a detailed description of the methodology.

Given a sample of size $n$, $(\ell_1,\ldots,\ell_n)$, where $\ell_r$ is the frequency of species that have been observed $r$-times in the sample, $r=1,\ldots,n$. We have $\sum_{i=1}^n i \ell_i=n$. We denote the number of different species that have been observed in the sample by $j$. We get $\sum_{i=1}^n \ell_i=j$. 

Based on a sample of size $n$, for an additional unobserved sample size $m \geq 0$ and for any frequency $k=0, \ldots,n+m$, using a non parametric Bayesian approach, Favaro et al provide an estimator for the probability $U_{n+m}^k$ that the $(n+m+1)$-th observation coincides with a species whose frequency, within the sample of size $n+m$, is exactly $k$.

We are interested in discovering new species, that correspond to the case $k=0$.

From Section~2 of on p.1190 we obtain 
\[
U_{n+0}(0)=\frac{V_{n+1,j+1}}{V_{n,j}}
\]
where, for the two-parameter Poisson-Dirichlet process, we have $V_{n,j}=\prod_{i=1}^{j-1} (\theta+i\sigma)/(\theta+1)_{n-1}$, $\sigma \in (0,1)$, $\theta >- \sigma$. The symbol $(a)_n$ denotes the $n$-th ascending factorial of $a$,  $(a)_n=a(a+1)\ldots(a+n-1)$, $(a)_0 \equiv 1$.
It follows that 
\[
U_{n+0}(0)=\frac{\theta+j\sigma}{\theta+n}
\]
and, for $m>0$, we obtain
\[
U_{n+m}(0)=\frac{\theta+j\sigma}{\theta+n} \frac{ (\theta+n+\sigma)_m }{ (\theta+n+1)_m  } \, .
\]

The estimates $\hat{\sigma},\hat{\theta}$ of $\sigma,\theta$ are obtained as    
\begin{equation} \label{eq:tetasigma}
\arg \max_{(\sigma,\theta)} \frac{\prod_{i=1}^{j-1} (\theta+i\sigma)}{(\theta+1)_{n-1}} n! \prod_{i=1}^n \{ \frac {(1-\sigma)_{i-1}}{i!} \}^{\ell_i} \frac{1}{\ell_i!} \, .
\end{equation}

Using $(\hat{\theta},\hat{\sigma})$ we finally obtain the estimates of the discovery probability at the $(n+1)$-th observation
\begin{equation} \label{eq:dp}
\hat{U}_{n+0}(0)=\frac{\hat{\theta}+j\hat{\sigma}}{\hat{\theta}+n}
\end{equation}
and at the $(n+m+1)$-th observation, $m>0$,
\begin{equation} \label{eq:dpm}
\hat{U}_{n+m}(0)=\frac{\hat{\theta}+j\hat{\sigma}}{\hat{\theta}+n} \frac{ (\hat{\theta}+n+\hat{\sigma})_m }{ (\hat{\theta}+n+1)_m  }
\end{equation}

\section{Methodology and Applications} \label{sec:2}
We  repeat the search for optimal designs to analyse the population $\mathcal{A}_\alpha^D$ of $D$-optimal designs that can be found for a given problem using a predefined algorithm. Each time the algorithm starts from a randomly chosen initial design. We set a maximum number of iterations equal to $M_{\star}$ and we continue the process until the estimate of the discovery probability at the subsequent observation goes under a given threshold $p_{\star}$  or the maximum number of iterations is reached.

The procedure can be described as follows. A problem $(\design,\mathcal{M},\phi)$, with $\phi=D$ in our examples, is defined and an algorithm $\alpha$ for $\phi$-optimal design generation is chosen. For each iteration $s$, $s=1,\ldots,M_{\star}$,
\begin{enumerate}
\item using the algorithm $\alpha$, a $\phi$-optimal saturated design $\fraction_s$ is obtained;
\item the values of the $\phi$-criterion of $\fraction_s$ is computed;
\item the vector $(\ell_1,\ldots,\ell_{s})$ is built, where $\ell_r$ is the number of species with frequency $r$, $r=1,\ldots,s$;
\item an estimate $(\hat{\sigma}_s,\hat{\theta}_s)$ is obtained, see Eq.~\ref{eq:tetasigma};
\item an estimate of $\hat{U}_{s+0}(0)$ is computed using Eq.~\ref{eq:dp};
\item if $\hat{U}_{s+0}(0)<p_{\star}$ the algorithm stops, otherwise the next iteration $s+1$ is performed (if $s+1>M_{\star}$ the algorithm stops).
\end{enumerate}
The main output of the algoritm is a set of designs, where each design belongs to a different species, i.e. has a different value of the $\phi$-criterion.

We show how the methodology works using the following problem. Let us consider $7$ factors, each with $2$ levels and the model that contains the overall mean, the main effects and all the 2-factor interactions for a total of $1+7+21=29$ degrees of freedom. We search for \emph{saturated} $D$-optimal designs that is $D$-optimal designs that contains $29$ points.  

We use Proc Optex \cite{sasqc:10} with the exchange method, which is its default search method. With the default setting, the algorithm starts from $10$ initial randomly chosen designs providing $10$ $D$-optimal designs. We consider the design with the highest value of the $D$-efficiency of the $10$ optimal designs as the optimal design found by the algorithm.

Setting the seed that is used for the random generation of the initial designs at $6789$, the best of the $10$ optimal designs, that we denote by $\fraction_1$, has $D_{\fraction_1}=9.0911E39$ and $E_{\fraction_1}^D=82.3162$, where $E_{\fraction}^D$, the $D$-efficiency of a $\fraction$, is defined as 
\[
E_{\fraction}^D = 100 \times \left( \frac{1}{{\#\fraction}} D_{\fraction}^{\frac{1}{{\#\fraction}}} \right)
\]
where $\#\fraction$ is the number of runs of $\fraction$ that coincides with the degrees of freedom of the model for saturated designs.

Now we run the procedure above with $M_{\star}=1,000$ and $p_{\star}=0.10$. 

After $493$ runs, the estimate of the discovery probability at the next observation becomes lower than $p_\star=0.10$ and the algorithm stops ($\tilde{U}_{493+0}(0) \approx 0.099)$. We find $103$ different local $D$-optimal designs. All these designs are not isomorphic (Proposition~\ref{pr:iso}). The maximum (minimum) value of $D$-efficiency is $85.6265$ ($78.9605$).

\begin{table}
\caption{Number $\ell_r$ of $D$ optimal designs that have found $r$ times, $r=1,\ldots,493$; only $\ell_r \neq 0$ are shown.}
\label{tab:1}       
\begin{tabular}{rr}
\hline\noalign{\smallskip}
$r$ & $\ell_r$  \\
\noalign{\smallskip}\hline\noalign{\smallskip}
1 & 47 \\
2 & 18 \\
3 & 7 \\
4 & 10 \\
5 & 2 \\
6 & 4 \\ 
9 & 2 \\
11 & 1 \\
12 & 1 \\
14 & 2 \\
15 & 1 \\
16 & 1 \\
17 & 2 \\
20 & 1 \\
36 & 1 \\
39 & 1 \\
40 & 1 \\
46 & 1 \\
\noalign{\smallskip}\hline
Total & 103 \\
\noalign{\smallskip}\hline
\end{tabular}
\end{table}
  

We decide to continue the search for new species choosing $p_{\star}=0.05$ and $M_{\star}=2,000$. The latter value is chosen taking into account that using Eq.~\ref{eq:dpm} we get $\tilde{U}_{493+1000}(0)=0.049$ and $\tilde{U}_{493+2000}(0)=0.035$. We observe that these supplementary runs are added to the previous ones.

After $1,271$ supplementary runs the estimate of the discovery probability at the next observation becomes lower than $0.05$, $\tilde{U}_{1764+0}(0) \approx 0.0499$. After $1,271+493=1,764$ simulations we observe $191$ different $D$-optimal designs. The maximum value of $D$-efficiency is still $85.6265$, while the minimum is $78.1134$.

We can now use the Fedorov algorithm, \cite{fedorov1972theory}, that is considered more reliable, even if slower, than the exchange algorithm. We keep the standard setting for which, at each iteration, $10$ optimal designs are generated and the one among them that has the highest $D$-efficiency value is taken as the optimal design.

We choose $3456$ as the initial seed. The first iteration provides an optimal design $\fraction_1$ with $E_{\fraction_1}^D=82.7079$.  Now we repeat the procedure with $M_{\star}=1,000$ and $p_{\star}=0.10$. After only $18$ iterations, as $\tilde{U}_{18+0}(0) \approx 0.087$, the algorithm stops, with $4$ different designs. The maximum (minimum) value of $D$-efficiency is $83.9844$ ($82.4212$). We have empirical evidence that the Fedorov algorithm is more stable than the exchange algorithm. We observe that the best design found with the exchange algorithm, that has $D$-efficiency equal to $85.6265$, is not found in this first sample. We were able to find it running the algorithm again with $M_\star=1,000$ and $p_\star=0.01$.

\section{The algorithm} \label{sec:algo}
In this section we provide a detailed description of the algorithm that has been developed to study the population $\mathcal{A}_\alpha^D$ that contains all the $D$-optimal designs that can be found by the algorithm $\alpha$. 

A problem $(\design,\mathcal{M},\phi=D)$ is defined and an algorithm $\alpha$ for $D$-optimal design generation is chosen. 
The set of candidates that, in our setting, is the full factorial design is generated using an ad-hoc module written in SAS/IML.
The algorithm $\alpha$ can be chosen from a list of methods that includes the exchange algorithm and the Fedorov algorithm.

For each iteration $s$, $s=1,\ldots,M_{\star}$,
\begin{enumerate}
\item using the algorithm $\alpha$, a $D$-optimal saturated design $\fraction_s$ is obtained;
\item the value of the $D$-efficiency, $E_{\fraction_s}^D$, of $\fraction_s$ is computed;
\item the vector $(\ell_1,\ldots,\ell_s)$ is built, where $\ell_r$ is the number of species with frequency $r$, $r=1,\ldots,s$;
\item an estimate $(\hat{\sigma}_s,\hat{\theta}_s)$ is obtained, see Eq.~\ref{eq:tetasigma};
\item an estimate of $\hat{U}_{s+0}(0)$ is computed using Eq.~\ref{eq:dp};
\item if $\hat{U}_{s+0}(0)<p_{\star}$ the algorithm stops, otherwise the next iteration $s+1$ is performed (if $s+1>M_{\star}$ the algorithm stops).
\end{enumerate}

The main output of the algoritm is a set of designs, where each design belongs to a different species, i.e. has a different value of the $D$-criterion.

\subsection{Steps 1 and 2}
At iteration $s$, with the chosen algorithm $\alpha$, the Proc Optex procedure is used to generate a $D$-optimal design, $\fraction_s$. The species of $\fraction_s$ is the value of its $D$-efficiency, $E_{\fraction_s}^D$. The value of the efficiency is rounded to four decimal digits to avoid creating different species from numerical effects.  

\subsection{Step 3}
Using all the designs $\fraction_1, \ldots, \fraction_s$ with their corresponding $D$-efficiencies, $E_{\fraction_1}^D, \ldots, E_{\fraction_s}^D$ the vector $(\ell_1,\ldots,\ell_s)$ is built, where $\ell_r$ is the number of species with frequency $r$, $r=1,\ldots,s$.

\subsection{Step 4}
An estimate $(\hat{\sigma}_s,\hat{\theta}_s)$ must be obtained searching for $(\sigma,\theta)$, $\sigma \in (0,1)$, $\theta>-\sigma$ that maximizes $\xf(\sigma,\theta)$, (see Eq.~\ref{eq:tetasigma}),
\[
\xf(\sigma,\theta)=\frac{\prod_{i=1}^{j-1} (\theta+i\sigma)}{(\theta+1)_{n-1}} n! \prod_{i=1}^n \{ \frac {(1-\sigma)_{i-1}}{i!} \}^{\ell_i} \frac{1}{\ell_i!}
\]

The Genetic Algorithm module of SAS/IML has been used. In order to manage the constraints $\sigma \in (0,1)$, $\theta>-\sigma$ the search has been performed in the region $\mathcal{R=}\left[\delta, 1-\delta\right] \times \left[-(1-\delta),T_M\right]$ with $\delta=0.01$ and $T_M=1,000$. This region contains the non-feasible region made by the points inside the simplex $\mathcal{S}=\mathcal{R}\cap \{(\sigma,\theta): \theta \leq -\sigma\}$ whose vertices are $(\delta,-(1-\delta))$, $(\delta,-\delta)$ and $(1-\delta,-(1-\delta))$.
We observe that the edges of $\mathcal{S}$ contain non-feasible points.

We decided to manage this constraint with the penalty method, because this method usually works well when most of the points in the solution space do not violate the constraints, as in our problem. The way in which the penalty in the objective function for unsatisfied constraints has been imposed is described here. 

From the point of view of the search of the point $(\sigma_\star,\theta_\star)$ that maximizes $f(\sigma,\theta)$, it is equivalent to consider $\log \xf(\sigma,\theta)$ instead of $\xf(\sigma,\theta)$
\begin{eqnarray*} 
\log \xf(\sigma,\theta)=\log(\prod_{i=1}^{j-1} (\theta+i\sigma))+\log(n!)+ \\
-\log((\theta+1)_{n-1})+\log(\prod_{i=1}^n \{ \frac {(1-\sigma)_{i-1}}{i!} \}^{\ell_i}) -\log(\ell_i!) \, .
\end{eqnarray*}
Omitting the terms that do not depend on $\sigma$ and $\theta$ and as $(a)_n=\frac{\xgamma(a+n)}{\xgamma(a)}$ where $\xgamma$ is the gamma function, the previous equation becomes the function $\xf_\star(\sigma,\theta)$ here
\[
\xf_\star(\sigma,\theta)=\xf_\star^{(1)}(\sigma,\theta)+\xf_\star^{(2)}(\sigma,\theta) \, ,
\]
where
\[
\xf_\star^{(1)}(\sigma,\theta)=\sum_{i=1}^{j-1}\xf_\star^{(1,i)}(\sigma,\theta)
\]
with $\xf_\star^{(1,i)}(\sigma,\theta)=\log(\theta+i\sigma)$ and
\begin{eqnarray*}
\xf_\star^{(2)}(\sigma,\theta)=-\log\xgamma(\theta+n)+\log\xgamma(\theta+1)+\\
+\sum_{i=1}^n \ell_i\log\xgamma(i-\sigma)-j\log\xgamma(1-\sigma) \, .
\end{eqnarray*}
We observe that, if the point $(\sigma,\theta)\in \mathcal{R}$ does not satisfy the constraint $\theta > -\sigma$ only  $\xf_{\star}^{(1)}(\sigma,\theta)$ becomes not defined. We apply a penalty value to $\xf_{\star}^{(1)}(\sigma,\theta)$ and to $\xf_{\star}^{(2)}(\sigma,\theta)$ as described below.

Given a point $P_1$ in the non-feasible region, $P_1 = (\sigma,\theta) \in \mathcal{S}$, $\tilde{P}_1$, the closest point to $P_1$ with respect to the euclidean distance that lies in the feasible region, is determined
\[
\tilde{P}_1 = (\tilde{\sigma},\tilde{\theta})=(\frac{1}{2} (\sigma - \theta + \epsilon), \frac{1}{2} (\theta - \sigma + \epsilon))
\]
where $\epsilon$ is a very small number to ensure that $\tilde{P}_1$ is feasible, i.e. $\tilde{P}_1 \in \mathcal{R} \cap \overline{\mathcal{S}}$. We used $\epsilon=0.001$. The value of the function $\xf_\star^{(1,1)}$ is computed in $\tilde{P}_1$ getting $\tilde{Y}_1=\xf_\star^{(1,1)}(\tilde{\sigma},\tilde{\theta})=\log{\epsilon}$. Then the value $Y_1$ of $\xf_{\star}^{(1,1)}$ in $P_1$ is defined as $\xf_\star^{(1,1)}(\sigma,\theta)=(1+b_1)\tilde{Y}_1$ where $b_1$ is the euclidean distance between $P_1$ and $\tilde{P}_1$, $b_1=\sqrt{\frac{1}{2}(\sigma+\theta-\epsilon)^2}$. In an analogous way, we apply this penalty method to all $P_i=(i\sigma,\theta)$ that eventually fall in the non-feasible region $\mathcal{S}$ getting $\xf_{\star,P}^{(1)}(\sigma,\theta)$, the penalized version of $\xf_{\star}^{(1)}(\sigma,\theta)$,
\[
\xf_{\star,P}^{(1)}(\sigma,\theta)=\sum_{i=1}^{j-1} \xf_{\star}^{(1,i)}(\sigma,\theta)
\]
where
\[
\xf_{\star}^{(1,i)} =
\begin{cases}
\log(\theta+i\sigma) & \text{if } \theta+i\sigma>0 \\
(1+b_i)\log(\epsilon) & \text{if } \theta+i\sigma \leq 0  
\end{cases} , \,
i=1,\ldots,j-1
\, ,
\]  
and $b_i$ is the euclidean distance between $P_i=(i\sigma,\theta)$ and $\tilde{P}_i=(\frac{1}{2}(i\sigma-\theta+\epsilon,\frac{1}{2}(\theta-i\sigma+\epsilon)$ determined as described above. 
The penalized version $\xf_{\star,P}^{(2)}(\sigma,\theta)$ of $\xf_{\star}^{(2)}(\sigma,\theta)$ is simply defined as
\[
\xf_{\star,P}^{(2)}(\sigma,\theta)=
\begin{cases}
\xf_{\star}^{(2)}(\sigma,\theta) & \text{if } \theta+\sigma>0 \\
(1+b_1)\xf_{\star}^{(2)}(\sigma,\theta) & \text{if } \theta+i\sigma \leq 0 \\
& \text{ and } \xf_{\star}^{(2)}(\sigma,\theta) \leq 0 \\
(1-b_1)\xf_{\star}^{(2)}(\sigma,\theta) & \text{if } \theta+i\sigma \leq 0 \\
& \text{ and } \xf_{\star}^{(2)}(\sigma,\theta) > 0 
\end{cases}
\, .
\]   
We observe that
\begin{enumerate}
\item $p<q \Rightarrow b_p > b_q \; p,q =1,\ldots,j-1$;
\item $b_1 \leq \frac{\sqrt{2}}{2}(1+\epsilon -2\delta)$. For $\delta=0.01$ and $\epsilon=.001$ we get $b_1<0.694$. 
\end{enumerate}
Using the penalty method, an estimate $(\hat{\sigma}_s,\hat{\theta}_s)$ is obtained finding the maximum of $\xf_{\star,P}(\sigma,\theta)=\xf_{\star,P}^{(1)}(\sigma,\theta)+\xf_{\star,P}^{(2)}(\sigma,\theta)$.

\subsection{Steps 5 and 6}
The estimate of the discovery probability at the next iteration, $\hat{U}_{s+0}(0)$, is computed as described in Sect.~\ref{sec:2}, Eq~\ref{eq:dp}. If its value is lower than $p_{\star}$ the algorithm stops, otherwise the next iteration $s+1$ is performed (if $s+1 > M_\star$ the algorithm stops).  

\section{Conclusion} \label{sec:conclusion}
Given an optimality crierion $\phi$, the problem of $\phi$-optimal design generation has been addressed. A methodology to support the decision whether to continue or stop the search for optimal designs has been developed. It combines recent advances on discovery probability estimation, based on a Bayesian non parametric approach, \cite{favaro:12}, with well known methods for optimal design generation. 

In principle, this methodology could be applied to any discrete optimisation problem. This topic will be part of future research.

A software code, written in SAS, that makes use of the Proc Optex procedure, has been developed.

\begin{acknowledgements}
I would like to thank both Mauro Gasparini (Politecnico di Torino), \cite{gasp_metron},  and Giovanni Pistone (Collegio Carlo Alberto, Moncalieri, Torino) for the helpful discussions I had with them.
\end{acknowledgements}

\end{document}